\newcommand{\bfW}{\mathbf{W}}
\newcommand{\bfw}{\mathbf{w}}
\newcommand{\RR}{\mathbb{R}}
\newcommand{\Ss}{\mathcal{S}}
\newcommand{\T}{\mathcal{T}}
\newcommand{\VV}{\mathcal{V}}
\newcommand{\NN}{\mathcal{N}}
\newcommand{\e}{\mathbf{e}}
\newcommand{\x}{\mathbf{x}}
\newcommand{\h}{\mathbf{h}}
\newcommand{\uu}{\mathbf{u}}
\newcommand{\vv}{\mathbf{v}}
\newcommand{\F}{\mathcal{F}}
\newcommand{\dP}{\mathcal{P}}
\newcommand{\D}{\partial}
\newcommand{\DD}{\mathcal{D}}
\newcommand{\sumd}{\tau}
\newtheorem{definition}{Definition}[section]
\newtheorem{theorem}{Theorem}[section]
\newtheorem{remark}{Remark}[section]
\newtheorem{corollary}{Corollary}[theorem]
\newtheorem{proposition}{Proposition}[section]
\title{Operational Calculus for Differentiable Programming}
\date{}
\author[1]{Žiga Sajovic}
\author[2]{Martin Vuk }
\affil[1]{\footnotesize XLAB d.o.o.\authorcr ziga.sajovic@xlab.si}
\affil[2]{\footnotesize University of Ljubljana, Faculty of Computer and Information Science \authorcr martin.vuk@fri.uni-lj.si}
\begin{document}

\maketitle

%auto-ignore
\begin{abstract}
  In this work we present a theoretical model for differentiable programming. We
  construct an algebraic language that encapsulates formal semantics of
  differentiable programs by way of \emph{Operational Calculus}. The algebraic
  nature of Operational Calculus can alter the properties of the
  programs that are expressed within the language and transform them into their
  solutions.

  In our model, programs are elements of \emph{programming spaces} and viewed as
  maps from the \emph{virtual memory space} to itself. Virtual memory space is
  also an algebra of programs, \emph{an algebraic data structure} one can
  calculate with. We define the \emph{operator of differentiation} ($\D$) on
  programming spaces and, using its powers, implement the \emph{general shift
    operator}. We provide the formula for the expansion of a differentiable
  program into an infinite tensor series in terms of the powers of $\D$ and
  implement a differentiable composition of differentiable programs by
  expressing the \emph{operator of program composition} in terms of the
  generalized shift operator and $\D$. The presented operators serve as
  an abstraction and act as the main components of our language.
  
  We demonstrate our model's usefulness in differentiable programming by using it
  to analyse iterators, deriving \emph{fractional iterations} and their
  \emph{iterating velocities}, and explicitly solve the special
  case of \emph{ReduceSum}.
 \end{abstract}

%auto-ignore
\section{Introduction}

According to John Backus, Von Neumann languages do not have useful properties for reasoning about programs. Axiomatic and denotational semantics are precise tools for describing and understanding conventional programs, but they only talk about them and cannot alter their ungainly
properties \cite{backus}. This issue has partially been addressed by algebraic data types employed by functional programming, where a mapping has been shown between grammars and semirings \cite{7Trees}.
Yet due to the lack of inverses (hence the semiring structure) they remain limited in the algebraic manipulations they are allowed to employ \cite{complexCat}.

As computer programs are the dominant tool for modern problem solving, the need
for examining the analytic properties of programs led to the development of
various tools for dealing with derivatives of computer programs (finite
difference methods, automatic differentiation). Yet the developed techniques are only efficient ways of calculating derivatives, and do not construct any meaningful algebraic structure over differentiable programs. As such, there is still a need for a framework that properly captures the analytic properties of differentiable programs and provides higher-order constructs that can reason about them. Such a framework can be provided by \emph{Operational Calculus}, because \emph{unlike von Neumann languages, the language of ordinary algebra is suitable both for stating its laws and for transforming an equation into its solution, all within the language} \cite{backus}.

The ideas of functional programming and automatic differentiation have been
combined to some extent successfully in the field of Deep Learning for example.
It has shown itself to be more than a collection of machine learning algorithms
and the name \emph{Differentiable Programming} emerged as a new programming
paradigm. But because the field is still in its youth, most of the advances
come as a result of empirical investigations. Yet, as it is founded on rigorous
mathematical objects, it offers an opportunity to be formalized as an algebraic
language. Mathematical analysis and calculus found their way into programming,
where different fields employ analytic properties of programs. What seems to be
lacking in these attempts is a mechanism that would facilitate revealing
transformations of these properties, while abstracting away the gory details of
calculus. 

The proposed theoretical model and the constructed operational calculus aim to
fill this gap. We have been inspired by the development of differentiable programming to
formalize a theoretical model, that encompasses the ideas underlying
differentiable programming and provides a general setting for the study of
differentiable programs. The presented theoretical model enables analytic
investigations of differentiable programs through algebraic tools, that are
closer to the field of programming; i.e. the presented operators can take the
same role as higher order functions in functional programming. We introduce a
\emph{Virtual Tensor Machine} as a language that extends functional definition
of programs with a \emph{Tensor Series Algebra} of the memory. Such a tensor
description of the memory can also serve as a formalization of recent
advancements in high performance computing hardware, ex. tensor processing
units by Google and tensor cores by Nvidia. This algebraic structure inherent
to our model allows us to establish an \emph{Operational Calculus} of
higher-order constructs that can facilitate reasoning about differentiable
programs. Furthermore, the presented model is self-sufficient, as the
Operational Calculus presented herein is implemented strictly within the
language itself. We hope, that the introduction of Operational Calculus into the
field of computer science will provide a new approach to solving problems and offer
a different view on the field, as it has already done in modern physics \cite{OpCalc}.

We demonstrate our frameworks usefulness to differentiable programming by using it to
analyse iterators of differentiable programs and derive their \emph{fractional 
iterations}. This allowed us to derive their \emph{iterating velocities}, ie. higher 
order derivatives of the iterates with respect to the number of iterations, which may 
offer new insights into iterated processes that feature prominently in programming. We 
use these ideas to explicitly solve the special case of \emph{ReduceSum} and its (higher-
order) iterating velocities.

%auto-ignore
\section{Computer Programs as Maps on a Vector Space}

We will model computer programs as maps on a vector space. If
we only focus on the real valued variables (of type \texttt{float} or
\texttt{double}), the state of the memory can be seen as a high
dimensional vector\footnote{We assume the variables of interest to be of type \texttt{float} for
  simplicity. Theoretically any field can be used instead of $\RR$.}. 
A set of all the possible states of the program's memory,
can be modeled by a finite dimensional real vector space $\VV\equiv \RR^n$. We
will call $\VV$ the \emph{memory space of the program}. The effect of a computer
program on its memory space $\VV$, can be described by a map
\begin{equation}
  \label{eq:map}
  P:\VV\to \VV.
\end{equation}
A programming space is a space of maps $\VV\to\VV$ that can be implemented as a
program in a specific programming language. 
\begin{definition}[Euclidean machine] The tuple $(\VV,\F)$ is an Euclidean machine, where
  \begin{itemize}
  \item
  $\VV$ is a finite dimensional vector space over a complete field $K$, serving
  as memory\footnote{In most applications the field $K$ will
    be $\RR$}
  \item
  $\F< \VV^\VV$ is a subspace of the space of maps $\VV\to \VV$, called the \emph{programming space}, serving as actions on the memory.
  \end{itemize}  
\end{definition}

At first glance, the \emph{Euclidean machine} seems like a description of functional programming, with its compositions inherited from $\F$. An intended impression, as we wish for the \emph{Euclidean machine} to build on its elegance. But note that in the coming section an additional restriction is imposed on $\F$; that of its elements being differentiable.

%auto-ignore
\section{Differentiable Maps and Programs}

To define differentiable programs, let us first recall some
definitions from multivariate calculus.
\begin{definition}[Derivative]
  Let $V,U$ be Banach spaces. A map $P:V\to U$ is differentiable at a point
  $\x\in V$, if there exists a linear bounded operator $TP_\x:V\to U$ such that
  \begin{equation}
    \label{eq:frechet}
    \lim_{\h\to 0}\frac{\|P(\x+\h)-P(\x)-TP_\x(\h)\|}{\|\h\|} = 0.
  \end{equation}
  The map $TP_\x$ is called the \emph{Fréchet derivative} of the map $P$ at the
  point $\x$.
\end{definition}
For maps $\RR^n\to \RR^m$ Fréchet derivative can be expressed by multiplication
of vector $\h$ by the Jacobi matrix $\mathbf{J}_{P,\x}$ of partial
derivatives of the components of the map $P$
\begin{equation*}
  T_\x P(\h) = \mathbf{J}_{P,\x}\cdot \h.
\end{equation*}

We assume for the remainder of this section that the map $P:V\to U$ is
differentiable for all $\x\in V$. The derivative defines a map from $V$ to
linear bounded maps from $V$ to $U$. We further assume $U$ and $V$ are finite
dimensional. Then the space of linear maps from $V$ to $U$ is isomorphic to the
tensor product $U\otimes V^*$, where the isomorphism is given by the
tensor contraction, sending a simple tensor $\uu\otimes f\in U\otimes
V^*$ to a linear map
 \begin{equation}
   \label{eq:lin_tenzor}
   \uu\otimes f:\x \mapsto f(\x)\cdot \uu.
 \end{equation}
The derivative defines a map
\begin{eqnarray}
  \label{eq:odvod_preslikava}
  \D P&:& V\to U\otimes V^*\\
  \D P&:& \x \mapsto T_\x P.
\end{eqnarray}
One can consider the differentiability of the derivative itself $\D P$ by
looking at it as a map \eqref{eq:odvod_preslikava}. This leads to the definition
of the higher derivatives.
\begin{definition}[Higher derivatives]
  \label{def:higher_derivatives}
  Let $P:V\to U$ be a map from the vector space $V$ to the vector space $U$. 
The derivative $\D^k P$ of order $k$ of the map $P$ is the map
\begin{eqnarray}\label{eq:partial}
    \label{eq:visji_odvod}
    \D^kP&:&V\to U\otimes (V^*)^{\otimes k}\\
    \D^kP&:&\x\mapsto T_\x\left( \D^{k-1}P \right)
  \end{eqnarray}
\end{definition} 
\begin{remark}
  For the sake of clarity, we assumed in the definition above, that the map $P$
  as well as all its derivatives are differentiable at all points $\x$. If this
  is not the case, 
  definitions above can be done locally, which would introduce mostly technical
  difficulties. 
\end{remark}
Let $\e_1,\ldots,\e_n$ be a basis of $U$ and $x_1,\ldots x_m$ the basis of
$V^*$. Denote by $P_i=x_i\circ P$ the $i-th$ component of the map
$P$ according to the basis $\{\e_i\}$ of $U$.
Then $\D^kP$  can be defined in terms of
directional (partial) derivatives by the formula
\begin{equation}\label{eq:d}
  \partial^kP=\sum_{\forall_{i,\alpha}}\frac{\partial^k P_i}{\partial
      x_{\alpha_1}\ldots \partial x_{\alpha_k}}\e_i\otimes
    dx_{\alpha_1}\otimes\ldots \otimes dx_{\alpha_k}.
\end{equation}

\subsection{Differentiable Programs}

We want to be able to represent the derivative of a computer program in an
\emph{Euclidean machine} as a program in the same \emph{Euclidean machine}. We define three subspaces of the memory space $\VV$, that
describe how different parts of the memory influence the final result of the
program.   

Denote by $\e_1,\ldots \e_n$ a standard basis of the memory space $\VV$ and by
$x_1,\ldots x_n$ the dual basis of $\VV^*$. The functions $x_i$ are coordinate
functions on $\VV$ and correspond to individual locations(variables) in the
program memory.

\begin{definition}
  For each program $P$ in the programming space $\F<\VV^\VV$,
  we define the \emph{input} or \emph{parameter space} $I_P<\VV$ and the
  \emph{output space} $O_P<\VV$ to be the minimal vector sub-spaces spanned by
  the standard basis vectors, such that the map $P_e$, defined by the following
  commutative diagram 
\begin{equation} 
    \label{eq:induced_map}
\begin{tikzcd}
  \VV \arrow{r}{P} & 
  \VV \arrow{d}{\mathrm{pr}_{O_P}}\\
  I_P \arrow[hook]{u}{\vec{i}\mapsto \vec{i}+\vec{f}} 
  \arrow{r}{P_e}& O_P
\end{tikzcd}
  \end{equation}
does not depend of the choice of the element 
$\vec{f}\in F_P=(I_P+O_P)^\perp$.
The space $F_P=(I_P+O_P)^\perp$ is called \emph{free space} of the program $P$.
\end{definition}

The variables $x_i$ corresponding to the standard
basis vectors spanning the parameter, output and free space are called
\emph{paramters} or \emph{input variables}, \emph{output variables} and
\emph{free variables} correspondingly. Free variables are those that are
left intact by the program and have no influence on the final result other than
their value itself. The output of the program depends only on the values
of the input variables and consists of variables that have changed during
the program. Input parameters and output values might overlap. 

The map $P_e$ is called the \emph{effective map} of the program $P$ and
describes the actual effect of the program $P$ on the memory,
ignoring the free memory. 

The derivative of the effective map is of interest, when we speak about
differentiability of computer programs. 
\begin{definition}[Automatically differentiable programs]
  \label{def:program_derivative}
  A program $P:\VV\to \VV$ is \emph{automatically differentiable} if there exist
  an embedding of the space $O_P\otimes I_P^*$ into the free space $F_P$, and a program $(1+\D P):\VV\to \VV$,
  such that its effective map is the map
  \begin{equation}
    \label{eq:program_derivative}
    P_e\oplus \D P_e:I_P\rightarrow O_P\oplus (O_P\otimes I^*).
  \end{equation}
  A program $P:\VV\to \VV$ is \emph{automatically differentiable of order $k$}
  if there exist a program $\sumd_k P:\VV\to \VV$,
  such that its effective map is the map
  \begin{equation}
    \label{eq:program_derivative_higher}
    P_e\oplus \D P_e\oplus \ldots \D^k P_e:I_P\rightarrow O_P\oplus \left(O_P\otimes I^*\right)\oplus\ldots \left( O_P\otimes \left( I_p^*\right)^{k\otimes} \right).
  \end{equation}
\end{definition}

If a program $P:\VV\to \VV$ is automatically differentiable then it is also
differentiable as a map $\VV\to\VV$. However only the derivative of program's
effective map can be implemented as a program, since the memory space is limited to $\VV$. 
To be able to differentiate a program to the $k$-th order, we have to calculate
and save all the derivatives of the orders $k$ and less.

%auto-ignore
\section{Differentiable Programming Spaces}

The memory space of a program is rarely treated as more than a storage. But to endow the \emph{Euclidean machine} with added structure, this is precisely what to focus on. Loosely speaking, functional programming is described by monoids, and as such a tensor algebraic description of the memory space is the appropriate step to take in attaining the wanted structure.

\subsection{Memory space}

Motivated by the Definition
\ref{def:program_derivative}, we define
the \emph{memory space} for differentiable programs  as a sequence of vector spaces with
the recursive formula
\begin{eqnarray}
  \VV_0 &=& \VV\\
  \label{eq:universal_space}
  \VV_k &=& \VV_{k-1}+\left(\VV_{k-1}\otimes \VV^*\right).
\end{eqnarray}
Note that the sum is not direct, since some of the subspaces of $\VV_{k-1}$ and
$\VV_{k-1}\otimes \VV^*$ are naturally isomorphic and will be
identified\footnote{The spaces $\VV\otimes(\VV^*)^{\otimes (j+1)}$ and
  $\VV\otimes (\VV^*)^{\otimes j}\otimes \VV^*$ are naturally isomorphic and
  will be identified in the sum.
}.

  The space that satisfies the recursive formula (\ref{eq:universal_space}) is
  \begin{equation}
    \label{eq:k-th-virtual-space}
    \VV_k = \VV\otimes \left(K\oplus \VV^* \oplus (\VV^*\otimes \VV^*)\oplus\ldots
      (\VV^*)^{\otimes k}\right) = \VV\otimes T_k(\VV^*),
  \end{equation}
  where $T_k(\VV^*)$ is a subspace of \emph{tensor algebra} $T(\VV^*)$, consisting of
  linear combinations of tensors of rank less or equal $k$. This construction
  enables us to define all the derivatives as maps with 
  the same domain and codomain $\VV\to \VV\otimes T(\VV^*)$.

  As such, an arbitrary  element of the memory space $\bfW\in\VV_n$ is a mapping
\begin{equation}
\bfW:\VV\to\VV,
\end{equation}
defined as 
\begin{equation}
\bfW(\vv)=\bfw_0+\bfw_1\cdot\vv+\cdots+\bfw_n\cdot(\vv)^{\otimes n}\label{eq:Contraction},
\end{equation}
the sum of multiple contractions (where $\bfw_i\in\VV_i$). The expression \eqref{eq:Contraction} will be rigorously defined in Section \ref{sec:Vrsta}. With such a construction, the expansions and contractions of the memory space (reminiscent to the breathing of the stack) would hold meaning parallel to storing values; which is what motives the next definition.

\begin{definition}[Virtual memory space]\label{def:VV}
Let $(\VV,\F)$ be an Euclidean machine and let  

\begin{equation}
\VV_\infty = \VV\otimes T(\VV^*) = \VV\oplus
(\VV\otimes\VV^*)\oplus\ldots,\label{eq:virtual-memory}
\end{equation}
where $T(\VV^*)$ is the tensor algebra of the dual space $\VV^*$.
We call $\VV_\infty$ the \emph{virtual memory space} of a \emph{Euclidean machine} $(\VV,\F)$.
\end{definition}
The term virtual memory is used as it is only possible to embed certain subspaces of $\VV_\infty$ into memory space $\VV$, making it similar to
virtual memory as a memory management technique. 

We can extend each program $P:\VV\to \VV$ to the map on
universal memory space $\VV_\infty$ by setting the first component in the direct sum
\eqref{eq:virtual-memory} to $P$, and all other components to zero. Similarly
derivatives $\D^k P$ can be also seen as maps  from $\VV$ to $\VV_\infty$ by
setting $k$-th component in the direct sum \eqref{eq:virtual-memory} to $\D^k P$
and all others to zero.

\subsection{Differentiable Programming Spaces}

Let us define the following function spaces:
 \begin{equation}\label{eq:F_n}
  \F_n=\{f:\VV\to \VV\otimes T_n(\VV^*)\}
 \end{equation}
All of these function spaces can be seen as subspaces of $\F_\infty=\{f:\VV\to \VV\otimes
T(\VV^*)\}$, since $\VV$ is naturally embedded into $ \VV\otimes T(\VV^*)$. The
Fréchet derivative defines an operator on the space of smooth maps in $\F_\infty$\footnote{The operator $\D$ may be defined partially for other maps as
   well.}. We denote this operator $\D$. The image of any map
 $P:\VV\to \VV$ by operator $\D$ is its first derivative, while the higher order
 derivatives are just powers of operator $\D$ applied to $P$.
 Thus $\D^k$ is a mapping between function spaces $\eqref{eq:F_n}$
 \begin{equation}\label{eq:toFn+k}
 \D^k:\F^n\to\F^{n+k}.
 \end{equation}

 \begin{definition}[Differentiable programming space]\label{def:dP}
  A \emph{differentiable programming space} $\dP_0$ is any subspace of $\F_0$ such that
  \begin{equation}\label{eq:P}
  \D\dP_0\subset\dP_0\otimes T(\VV^*)
\end{equation}
The space $\dP_n<\F_n$ spanned by $\{\D^k\dP_0;\quad 0\le k\le n\}$ over $K$, is
called a differentiable programming space of order $n$. When all elements of
$\dP_0$ are analytic, we call $\dP_0$ an \emph{analytic programming space}. 
 \end{definition}
The definition of higher order differentiable programming spaces is justified by the following theorem. 
\begin{theorem}[Infinite differentiability]\label{izr:P}
  Any differentiable programming space $\dP_0$ is an
  infinitely differentiable programming space, meaning that
  \begin{equation}\label{eq:P_n}
      \D^k\dP_0\subset\dP_0\otimes T(\VV^*)
    \end{equation}
for any $k\in\mathbb{N}$.
\end{theorem}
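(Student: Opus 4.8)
The plan is to prove the statement by induction on $k$, where the defining property $\D\dP_0\subset\dP_0\otimes T(\VV^*)$ of a differentiable programming space serves as the base case $k=1$ (the case $k=0$ being trivial). For the inductive step I would assume $\D^k\dP_0\subset\dP_0\otimes T(\VV^*)$ and, writing $\D^{k+1}=\D\circ\D^k$, reduce the claim to showing that the subspace $\dP_0\otimes T(\VV^*)$ is itself invariant under $\D$, i.e. $\D\big(\dP_0\otimes T(\VV^*)\big)\subset\dP_0\otimes T(\VV^*)$. Once this invariance is established, $\D^{k+1}\dP_0=\D(\D^k\dP_0)\subset\D(\dP_0\otimes T(\VV^*))\subset\dP_0\otimes T(\VV^*)$, closing the induction.

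To establish the invariance, I would use linearity of $\D$ to reduce to a simple tensor $P\otimes\omega$ with $P\in\dP_0$ and $\omega\in T(\VV^*)$, interpreted as the map $\x\mapsto P(\x)\otimes\omega$ into $\VV\otimes T(\VV^*)$. The decisive observation is that the tensor-algebra factor $\omega$ is a constant, independent of the point $\x$. Hence the Leibniz (product) rule for the Fréchet derivative acts only on the $\dP_0$ factor, giving $\D(P\otimes\omega)=(\D P)\otimes\omega$ after the natural identification of $\VV\otimes(\VV^*)^{\otimes j}\otimes\VV^*$ with $\VV\otimes(\VV^*)^{\otimes(j+1)}$. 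The base case then supplies $\D P\in\dP_0\otimes T(\VV^*)$, so $\D P=\sum_j Q_j\otimes\eta_j$ with $Q_j\in\dP_0$ and $\eta_j\in T(\VV^*)$, whence $\D(P\otimes\omega)=\sum_j Q_j\otimes(\eta_j\otimes\omega)$. Since $T(\VV^*)$ is an algebra, $\eta_j\otimes\omega\in T(\VV^*)$, so the result again lies in $\dP_0\otimes T(\VV^*)$, as required.

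I expect the main obstacle to be making the informal phrase ``the derivative passes through the constant tensor factor'' fully rigorous, namely justifying $\D(P\otimes\omega)=(\D P)\otimes\omega$ together with the correct placement of the freshly created $\VV^*$ slot. This requires committing to the identifications indicated in the footnote to \eqref{eq:universal_space}, so that appending a $\VV^*$ factor to an element of $\VV\otimes T(\VV^*)$ yields another element of $\VV\otimes T(\VV^*)$, and checking that these identifications are compatible with the multiplication $T(\VV^*)\otimes T(\VV^*)\to T(\VV^*)$ used to absorb $\omega$. Everything else — linearity, the reduction to simple tensors, and the bookkeeping of tensor grades — is routine once this compatibility is fixed.
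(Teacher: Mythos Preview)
Your proposal is correct and follows essentially the same inductive argument as the paper: both reduce $\D^{k+1}P$ to applying $\D$ to something already known to lie in $\dP_0\otimes T(\VV^*)$, and both exploit that the $T(\VV^*)$ factor is constant so $\D$ acts only on the $\dP_0$ component. The only cosmetic difference is that you phrase this coordinate-free via simple tensors $P\otimes\omega$ and the Leibniz rule, whereas the paper works componentwise, writing $\D^{n+1}P^i_{\alpha,k}=\D(\D^nP^i_\alpha)_k$ and invoking $(\D^nP^i_\alpha)\in\dP_0$; your formulation is arguably cleaner and makes the bookkeeping of the new $\VV^*$ slot more explicit.
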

\begin{proof} By induction on order $k$. For $k=1$ the claim holds by definition. Assume  $\forall_{P\in\dP_0}$,
  $\D^n\dP_0\subset\dP_0\otimes T(\VV^*)$. Denote by $P_{\alpha,k}^i$ the
  component of the $k$-th derivative for a multiindex $\alpha$  denoting the
  component of $T(\VV^*)$ and an index $i$ denoting the component of $\VV$.
  \begin{equation}\label{eq:inductionStep}
\D^{n+1}P_{\alpha,k}^i=\D(\D^n P^i_\alpha)_k\land(\D^n P^i_\alpha)\in\dP_0\implies \D(\D^n P^i_\alpha)_k\in \dP_0\otimes T(\VV^*)
  \end{equation}
  $$\implies$$
  $$\D^{n+1}\dP_0\subset\dP_0\otimes T(\VV^*)$$
Thus by induction, the claim holds for all $k\in \mathbb{N}$. 
\end{proof}

 \begin{corollary}\label{izr:P_n}
  A differentiable programming space of order $n$, $\dP_n:\VV\to \VV\otimes
  T(\VV^*)$, can be embedded into the tensor 
  product of the function space $\dP_0$ and the space $T_n(\VV^*)$ of
  multi-tensors of order less than equal $n$:
  \begin{equation}
    \label{eq:D_p_embed}
    \dP_n<\dP_0\otimes T_n(\VV^*).
  \end{equation}
 \end{corollary}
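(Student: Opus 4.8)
The plan is to derive Corollary~\ref{izr:P_n} directly from Theorem~\ref{izr:P} by assembling the finitely many derivative spaces into a single tensor product. The corollary asserts a bound on the order-$n$ space $\dP_n$, which by Definition~\ref{def:dP} is spanned over $K$ by $\{\D^k\dP_0 : 0 \le k \le n\}$. So it suffices to control each $\D^k\dP_0$ for $k$ in this finite range and then take their span.

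First I would invoke Theorem~\ref{izr:P}, which already gives $\D^k\dP_0 \subset \dP_0 \otimes T(\VV^*)$ for every $k \in \mathbb{N}$. The refinement needed here is that the tensor factor lands not in the full tensor algebra $T(\VV^*)$ but in the truncated piece $T_n(\VV^*)$ of rank at most $n$. For this I would examine the codomain of $\D^k$ as recorded in \eqref{eq:toFn+k} and \eqref{eq:visji_odvod}: applying $\D$ exactly $k$ times to a map $\VV \to \VV$ produces a map into $\VV \otimes (\VV^*)^{\otimes k}$, so $\D^k\dP_0 \subset \dP_0 \otimes (\VV^*)^{\otimes k} \subset \dP_0 \otimes T_k(\VV^*)$. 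Since $k \le n$ throughout, each such summand sits inside $\dP_0 \otimes T_n(\VV^*)$ by the inclusion $T_k(\VV^*) \hookrightarrow T_n(\VV^*)$.

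Next I would close the argument by taking $K$-linear spans. Because $\dP_0 \otimes T_n(\VV^*)$ is a $K$-vector space and each generating set $\D^k\dP_0$ lies inside it, the span of their union, which is exactly $\dP_n$, also lies inside it. This yields the claimed embedding $\dP_n < \dP_0 \otimes T_n(\VV^*)$, where the identification of naturally isomorphic summands (noted in the footnote following \eqref{eq:universal_space}) is what lets us read the image of $\D^k$ as living in the truncated tensor algebra rather than a larger redundant space.

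The main obstacle I anticipate is not the span step, which is routine, but making the word ``embedded'' precise: one must check that the inclusion $\dP_n \hookrightarrow \dP_0 \otimes T_n(\VV^*)$ is genuinely injective as a map of vector spaces and compatible with the identifications of isomorphic tensor summands built into the recursive memory-space construction \eqref{eq:universal_space}. In particular, the argument tacitly uses that the combined derivative map $\sumd_n$ of Definition~\ref{def:program_derivative}, whose components are the $\D^k P_e$, records each derivative in a distinct graded slot of $T_n(\VV^*)$, so that no collapsing occurs and the map is an embedding rather than merely a containment of images. Verifying this compatibility is the only delicate point; the rest follows formally from Theorem~\ref{izr:P}.
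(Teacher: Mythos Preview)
Your proposal is correct and aligns with the paper's treatment: the paper gives no explicit proof of this corollary, presenting it as an immediate consequence of Theorem~\ref{izr:P}, and your argument spells out precisely the implicit reasoning---use Theorem~\ref{izr:P} together with the grading information from \eqref{eq:toFn+k} and \eqref{eq:visji_odvod} to place each $\D^k\dP_0$ in $\dP_0\otimes T_n(\VV^*)$, then pass to the $K$-span. Your closing remarks on the injectivity of the embedding are in fact more careful than the paper, which simply writes ``$<$'' without further comment.
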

 
By taking the limit as $n\to \infty$, we consider 
  
  \begin{equation}
  \label{eq:P_algebra}
        \dP_\infty < \dP_0\otimes \T(\VV^*),
  \end{equation}
where $\T(\VV^*)=\prod_{k=0}^\infty (\VV^*)^{\otimes k}$ is the \emph{tensor series
  algebra}, the algebra of the infinite formal tensor series.\footnote{The
  tensor series algebra is a completion of the tensor algebra $T(\VV^*)$ in suitable topology.} We will call \eqref{eq:P_algebra} the tensor series algebra of the programming space.

\subsection{Virtual Tensor Machine}

We propose an abstract computational model that is capable of constructing differentiable programming spaces and provides a
framework for algebraic study of analytic properties of differentiable programs. 

Following from Theorem \ref{izr:P}, the tuple $(\VV, \dP_0)$ -- together with the structure of the tensor algebra $T(\VV^*)$ -- is sufficient for constructing differentiable programming spaces $\dP_\infty$, using linear combinations of elements of the tensor series algebra of the programming space $\dP_0\otimes T(\VV^*)$. This motivates the following definition.

\begin{definition}[Virtual tensor machine]\label{def:analyticMachine}
The tuple $M=\langle \VV,\dP_0\rangle$ is an analytic, infinitely  differentiable virtual machine, where
   
    \begin{itemize}
    \item
    $\VV$ is a finite dimensional vector space
    \item
    $\VV\otimes \T(\VV^*)$ is the virtual memory space
    \item
    $\dP_0$ is an analytic programming space over $\VV$.
    \end{itemize}
  \end{definition}

\noindent When composing contractions \eqref{eq:Contraction} of the memory with activation functions $\phi\in\dP$, we note that fully connected \emph{tensor networks},
\begin{equation} \label{eq:tenWord}
\NN(v)=\phi_k\circ W_k\circ\cdots\circ\phi_0\circ W_0(v),
\end{equation}
are basic programs in a virtual tensor machine (the vanilla fully connected neural network is  captured by the restriction $\forall_i(W_i\in\VV_1)$). The formulation \eqref{eq:tenWord} is trivially generalized to convolutional models, but is omitted here for brevity.

%auto-ignore
\section{Operational Calculus on Programming Spaces}\label{sec:operational}

By Corollary $\ref{izr:P_n}$ we may represent calculation of derivatives of the
map $P:\VV\to \VV$, with only one mapping $\sumd$. We define the operator
$\sumd_n$ as a direct sum of operators
 
 \begin{equation}\label{eq:DD}
    \sumd_n = 1+\D +\D^2 +\ldots + \D^n 
  \end{equation}
  
The image $\sumd_kP(\x)$ is a multi-tensor of order $k$, which is a direct sum of the map's value and all derivatives of order $n\le k$, all evaluated at the point $\x$:
\begin{equation}
  \label{eq:multi_odvod}
  \sumd_kP(\x) = P(\mathbf{x})+\partial_\mathbf{x} P(\mathbf{x}) + \partial^2_\mathbf{x} P(\mathbf{x}) + \ldots + \partial^k_\mathbf{x} P(\mathbf{x}).
\end{equation}
The operator $\sumd_n$ satisfies the recursive relation:
  \begin{equation}
    \label{eq:potenca(1+d)}
    \sumd_{k+1}=1+\D\sumd_{k},
  \end{equation}
that can be used to recursively construct programming spaces of arbitrary order. 
\begin{proposition}
Only explicit knowledge of $\sumd_1:\dP_0\to\dP_1$ is required for the
construction of $\dP_n$ from $\dP_1$. 
\end{proposition}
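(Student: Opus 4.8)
The plan is to lean entirely on the recursive relation \eqref{eq:potenca(1+d)}, $\sumd_{k+1}=1+\D\sumd_{k}$, and to argue that every operator it invokes is already fixed once $\sumd_1$ is known. First I would observe that explicit knowledge of $\sumd_1:\dP_0\to\dP_1$ is the same as explicit knowledge of $\D$ restricted to $\dP_0$: since $\sumd_1=1+\D$, subtracting the identity component recovers $\D|_{\dP_0}=\sumd_1-1$. So the whole question reduces to whether $\D|_{\dP_0}$ suffices to run the recursion, since running it from the given $\dP_1$ produces $\dP_2,\ldots,\dP_n$ in turn.

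The substance of the argument is that $\D|_{\dP_0}$, together with the fixed tensor algebra structure of $T(\VV^*)$ supplied by the virtual tensor machine, already determines $\D$ on all of $\dP_0\otimes T(\VV^*)$, hence on each $\dP_k$. By Theorem \ref{izr:P} every iterated derivative of an element of $\dP_0$ lies in $\dP_0\otimes T(\VV^*)$, so a generic element there is a finite sum $\sum_i Q_i\otimes t_i$ with $Q_i\in\dP_0$ and $t_i\in T(\VV^*)$. Since the tensors $t_i$ are constant with respect to the base point $\x$, the operator $\D$ annihilates them, and linearity together with the Leibniz rule for the contraction \eqref{eq:Contraction} gives $\D(Q_i\otimes t_i)=(\D Q_i)\otimes t_i$, the newly produced covariant index being appended within $T(\VV^*)$. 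Thus $\D$ on the larger space is simply the extension of $\D|_{\dP_0}$ by $\D\otimes 1$, and requires no data beyond $\sumd_1$.

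With this extension available I would close by induction on $k$, taking $\dP_1$ as the given base case. Assuming $\sumd_k$ has been built and lands in $\dP_0\otimes T_k(\VV^*)$, applying $\D$ — now defined on that space by the previous step — and adding the identity yields $\sumd_{k+1}=1+\D\sumd_k$, whose image spans $\dP_{k+1}$. Because each step consumes only $\D|_{\dP_0}$ extended canonically, the entire tower $\dP_1\subset\dP_2\subset\cdots\subset\dP_n$ is generated from $\sumd_1$ alone, which is the assertion.

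The step I expect to be the main obstacle is the middle one: making rigorous that $\D$ on $\dP_0\otimes T(\VV^*)$ is genuinely forced by its restriction to $\dP_0$. This rests on two facts that must be justified with care — that elements of $T(\VV^*)$ are point-independent and therefore lie in the kernel of $\D$, and that $\D$ obeys the Leibniz rule across the tensor product with the correct placement of the emerging dual index, as sanctioned by the natural isomorphisms identifying $\VV\otimes(\VV^*)^{\otimes j}\otimes\VV^*$ with $\VV\otimes(\VV^*)^{\otimes(j+1)}$. Both hinge on the precise meaning of the contraction \eqref{eq:Contraction}, fixed in Section \ref{sec:Vrsta}; once that expansion is in place the remaining Leibniz computation is routine index bookkeeping.
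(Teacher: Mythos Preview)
Your proposal is correct and follows essentially the same approach as the paper: the paper's one-line proof simply invokes the recursive relation \eqref{eq:potenca(1+d)} together with the induction step \eqref{eq:inductionStep} from Theorem~\ref{izr:P}, and your argument is a more carefully unpacked version of exactly that --- your tensor-product extension $\D\otimes 1$ is the coordinate-free restatement of the component-wise observation $\D^{n+1}P^i_{\alpha,k}=\D(\D^n P^i_\alpha)_k$ with $\D^n P^i_\alpha\in\dP_0$. The ``main obstacle'' you flag is precisely what \eqref{eq:inductionStep} handles, so nothing is missing.
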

\begin{proof}
  The construction is achieved following the argument $\eqref{eq:inductionStep}$ of the proof of Theorem \ref{izr:P}, allowing simple implementation, as dictated by $\eqref{eq:potenca(1+d)}$. 
\end{proof}
        
\begin{remark}\label{rem:vTen}
Maps $\VV\otimes T(\VV^*)\to \VV\otimes T(\VV^*)$ are constructible using
tensor algebra operations and compositions of programs in $\dP_n$.
\end{remark}
       
\begin{definition}[Algebra product]
 For any bilinear map
 $$\cdot :\VV\times \VV\to \VV$$
 we can define a
 bilinear product $\cdot$ on $\VV\otimes \T(\VV^*)$ by the following rule on the
 simple tensors:
 \begin{equation}
   \label{eq:algebra_product}
   (\vv\otimes f_1\otimes\ldots f_k) \cdot (\uu\otimes g_1\otimes\ldots g_l)=
(\vv\cdot \uu)\otimes f_1\otimes\ldots f_k\otimes g_1\otimes\ldots g_l 
 \end{equation}
extending linearly on the whole space $\VV\otimes\T(\VV^*)$
\end{definition}
\begin{theorem}[Programming algebra]\label{izr:alg}
 For any bilinear map $\cdot :\VV\times \VV\to \VV$ an infinitely-differentiable
 programming space $\dP_\infty$ is a function algebra, with the product defined
 by \eqref{eq:algebra_product}.
\end{theorem}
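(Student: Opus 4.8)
The plan is to realise $\dP_\infty$ as a product-closed subspace of the algebra of all maps $\VV\to\VV\otimes\T(\VV^*)$, the product being the pointwise lift of \eqref{eq:algebra_product}: for $P,Q\in\dP_\infty$ I set $(P\cdot Q)(\x)\defeq P(\x)\cdot Q(\x)$, where the right-hand side is the algebra product on $\VV\otimes\T(\VV^*)$. The statement then separates into a routine part, that \eqref{eq:algebra_product} indeed makes $\VV\otimes\T(\VV^*)$ an algebra, and the substantive part, that $\dP_\infty$ is stable under the induced pointwise product.

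First I would unpack \eqref{eq:algebra_product} as the tensor product of two algebra structures: the prescribed bilinear map $\cdot$ on the factor $\VV$, and the concatenation product $\otimes$ native to the tensor series algebra $\T(\VV^*)$. Bilinearity of the result is immediate, since the rule is imposed on simple tensors and extended linearly, while associativity and commutativity hold precisely to the extent that $\cdot$ enjoys them, because concatenation in $\T(\VV^*)$ is associative. This makes $\VV\otimes\T(\VV^*)$ an algebra, and lifting pointwise makes the whole space of maps $\VV\to\VV\otimes\T(\VV^*)$ a function algebra containing $\dP_\infty$ as a linear subspace.

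For closure I would use that, by Corollary \ref{izr:P_n} in the limit, every element of $\dP_\infty$ lies in $\dP_0\otimes\T(\VV^*)$, so it suffices to evaluate the product on simple representatives $P\otimes t$ and $Q\otimes s$ with $P,Q\in\dP_0$ and $t,s\in\T(\VV^*)$. By \eqref{eq:algebra_product} their product is $(P\cdot Q)\otimes(t\otimes s)$: the cotangent parts are merged by concatenation and stay in $\T(\VV^*)$, while the value parts are combined by $\cdot$. Hence closure reduces to the single point that $\dP_0$ is stable under the pointwise product $P\cdot Q$, which for an analytic programming space is automatic — a bilinear map $\cdot:\VV\times\VV\to\VV$ is polynomial, hence analytic, so the composition of analytic maps with it is again analytic. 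I would then record the compatibility of this stability with differentiation through the Leibniz rule $\D(P\cdot Q)=\D P\cdot Q+P\cdot\D Q$, itself the product rule for the Fréchet derivative of the bilinear operation, so that the derivative hierarchy generating $\dP_\infty$ is respected.

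The step I expect to be the main obstacle is exactly this closure claim at the level of $\dP_\infty$ rather than of the larger $\dP_0\otimes\T(\VV^*)$: one must verify that combining cotangent factors by concatenation and value factors by $\cdot$ never escapes the span of the derivatives $\{\D^k R : R\in\dP_0\}$. The Leibniz rule is the decisive tool here, since it rewrites a product of derivatives through derivatives of the product $P\cdot Q\in\dP_0$; the bookkeeping subtlety is the placement of the newly introduced $\VV^*$ index, which sits in different tensor slots in the two Leibniz terms and must be reconciled using the natural identifications of isomorphic tensor subspaces noted after \eqref{eq:universal_space}. Once the derivation property is established with that convention fixed, closure — and hence the function-algebra structure on $\dP_\infty$ — follows by induction on the total derivative order.
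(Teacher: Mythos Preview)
The paper states Theorem~\ref{izr:alg} without proof, so there is no argument in the text to compare against; your outline is already more detailed than anything the authors supply. Your decomposition is the natural one --- verify that \eqref{eq:algebra_product} makes $\VV\otimes\T(\VV^*)$ an algebra, lift pointwise to maps $\VV\to\VV\otimes\T(\VV^*)$, then check that $\dP_\infty$ is closed --- and you correctly isolate the only nontrivial point: the pointwise product $P\cdot Q$ of two elements $P,Q\in\dP_0$ must again lie in $\dP_0$.

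The gap is in how you close that point. You argue that for an analytic programming space closure ``is automatic'' because a bilinear map is polynomial and compositions of analytic maps are analytic. That shows $P\cdot Q$ is analytic; it does \emph{not} show $P\cdot Q\in\dP_0$, since Definition~\ref{def:dP} only asks that $\dP_0$ be a linear subspace of $\F_0$ with $\D\dP_0\subset\dP_0\otimes T(\VV^*)$, not a subspace closed under bilinear products of its values. Concretely, take $\VV=\RR$ with ordinary multiplication and let $\dP_0$ be the affine maps $x\mapsto ax+b$. Then $\D\dP_0$ consists of constants, so \eqref{eq:P} holds and $\dP_0$ is even analytic in the sense of Definition~\ref{def:dP}; yet $x\cdot x=x^2\notin\dP_0$, and one checks directly that $x^2$ is not in $\dP_\infty$ either. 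Your Leibniz-based induction cannot repair this: Leibniz expresses $\D^n(P\cdot Q)$ as a sum of products $(\D^jP)\cdot(\D^{n-j}Q)$, not the other way round, and the single relation it provides at each order is not enough to solve for an individual term. The theorem as written therefore needs the extra hypothesis that $\dP_0$ is itself closed under the bilinear product $\cdot$; with that in hand your argument via $\dP_\infty<\dP_0\otimes\T(\VV^*)$ and \eqref{eq:algebra_product} goes through.
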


 \subsection{Tensor Series Expansion}\label{sec:Vrsta}

With the fundamentals of our framework established, we can begin to implement operators within its algebra. We begin by implementing an operator that shifts the program from its initial value and can later be used for the implementation of iterators and composers.

In the space spanned by the set $\DD^n=\{\D^k;\quad 0\le k\le n\}$ over a field $K$, such an operator can be defined as
 \begin{equation*}
  e^{h\D}=\sum\limits_{n=0}^{\infty}\frac{(h\D)^n}{n!}
 \end{equation*}
 In coordinates, the operator $e^{h\D}$ can be written as a
 series over all multi-indices $\alpha$
 \begin{equation}\label{eq:e^d}
  e^{h\D}=\sum\limits_{n=0}^{\infty}\frac{h^n}{n!}\sum_{\forall_{i,\alpha}}\frac{\partial^n}{\partial
        x_{\alpha_1}\ldots \partial x_{\alpha_n}}\e_i\otimes
      dx_{\alpha_1}\otimes\ldots \otimes dx_{\alpha_n}.
 \end{equation}
The operator $e^{h\D}$ is a mapping between programming spaces $\eqref{eq:F_n}$
 \begin{equation*}
  e^{h\D}:\dP\to\dP_\infty,
 \end{equation*}
in which partial applications can be made complete
  \begin{equation}\label{eq:specProg}
    e^{h\D}:\dP\to \Big\{\VV\to \VV\otimes \T(\VV^*)\Big\},
  \end{equation}
by taking the image of the map $e^{h\D}(P)$ at a certain point $\vv\in \VV$.  
Thus, we construct a map from the space of programs,
to the space of polynomials using  $\eqref{eq:specProg}$. Note that the space of
multivariate polynomials 
$\VV\to K$ is isomorphic to symmetric algebra $S(\VV^*)$, which is in turn a
quotient of tensor algebra $T(\VV^*)$.
To any element of
 $\VV\otimes T(\VV^*)$ one can attach corresponding element of $\VV\otimes S(\VV^*)$
 namely a polynomial map  $\VV\to \VV$. Thus, similarly to \eqref{eq:P_algebra}, we consider the completion of the symmetric algebra $S(\VV^*)$ as the \emph{formal power series} $\Ss(\VV^{*})$, which is in turn isomorphic to a quotient of \emph{tensor series algebra} $\T(\VV^*)$. This leads to 
 \begin{equation}\label{eq:pToPol}
  e^{h\D}:\dP\to \Big\{\VV\to \VV\otimes \Ss(\VV^*)\Big\}.
 \end{equation}
 For any element $\vv_0\in \VV$, the expression $e^{h\D}(\cdot,\vv_0)$ is a map $\dP\to
 \VV\otimes \Ss(\VV^*)$, mapping a program to a formal power series (by switching the order of partial applications in \eqref{eq:pToPol}).

 We can express the
 correspondence between multi-tensors in $\VV\otimes T(\VV^*)$ and polynomial maps
 $\VV\to \VV$ given by multiple contractions for all possible indices. For a simple tensor $\uu\otimes
 f_1\otimes\ldots\otimes f_n\in \VV\otimes(\VV^*)^{\otimes n}$ the contraction by
 $\vv\in \VV$ is given by applying co-vector $f_n$ to $\vv$ \footnote{For order
   two tensors from $\VV\otimes\VV^*$ the contraction corresponds to
matrix vector multiplication.}
 \begin{equation}
   \label{eq:contraction}
 \uu\otimes f_1\otimes\ldots\otimes f_n\cdot \vv = f_n(\vv) \uu\otimes f_1\otimes\ldots f_{n-1}.
\end{equation}
By taking contraction multiple times, we can attach a monomial map to a
simple tensor by  
 \begin{equation}
   \label{eq:monomial}
 \uu\otimes f_1\otimes\ldots\otimes f_n\cdot (\vv)^{\otimes n} = f_n(\vv)f_{n-1}(\vv)\cdots f_1(\vv) \uu,
\end{equation}
Both contractions \eqref{eq:contraction} and \eqref{eq:monomial} are extended
by linearity to spaces $\VV\otimes \left(\VV^*\right)^{\otimes n}$ and further
to $\VV\otimes T(\VV^*)$.\footnote{Note that the simple order one tensor
  $\uu\in\VV$ can not be contracted by the vector $\vv$. To be consistent we
  define $\uu\cdot \vv = \uu$ and attach a constant map
  $\vv\mapsto \uu$ to order zero tensor $\uu$. The extension of
  \eqref{eq:monomial}
  to $\VV\otimes T(\VV^*)$ can be seen as a generalization of the affine map,
  where the zero order tensors account for translation.}
For a multi-tensor $\bfW=\bfw_0+\bfw_1+\ldots+\bfw_n\in\VV\otimes T_n(\VV^*)$,
where $\bfw_k\in\VV\otimes\left( \VV^*\right)^{\otimes k}$, applying the
contraction by a vector $\vv\in \VV$ multiple times yields a polynomial map
\begin{equation}
  \label{eq:polynomial_tensor}
  \bfW(\vv) = \bfw_0+\bfw_1\cdot \vv+\ldots+\bfw_n\cdot (\vv)^{\otimes n}.
\end{equation}
\begin{theorem}\label{izr:e^d}
  For a program $P\in\dP$  the expansion into an infinite tensor series
  at the point $\vv_0\in \VV$ is expressed by multiple contractions 
  \begin{multline}\label{eq:tenzorVrsta}
  P(\vv_0+h\vv) = \Big((e^{h\D}P)(\vv_0)\Big)(\vv)
  = \sum_{n=0}^\infty\frac{h^n}{n!}\D^nP(\vv_0)\cdot (\vv^{\otimes n})\\
  = \sum_{n=0}^\infty \frac{h^n}{n!}\sum_{\forall_{i,\alpha}}\frac{\partial^nP_i}{\partial
        x_{\alpha_1}\ldots \partial x_{\alpha_n}}\e_i\cdot
      dx_{\alpha_1}(\vv)\cdot\ldots \cdot dx_{\alpha_n}(\vv).
  \end{multline}
\end{theorem}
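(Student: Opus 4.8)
The plan is to read the chain of equalities in \eqref{eq:tenzorVrsta} as three separate claims, to locate all of the analytic content in the outermost one, and to treat the remaining two as bookkeeping against the definitions already in place. First I would invoke analyticity of $P$: since $P$ lies in an analytic programming space, its components $P_i$ are analytic at $\vv_0$, so the map $h\mapsto P(\vv_0+h\vv)$ is analytic near $h=0$ and equals its convergent power series, whose $n$-th coefficient is $\frac{1}{n!}$ times the $n$-th $h$-derivative at $h=0$. By the chain rule applied to the affine path $h\mapsto\vv_0+h\vv$, whose velocity is the constant vector $\vv$, this $n$-th $h$-derivative is the $n$-th directional derivative of $P$ at $\vv_0$ along $\vv$, which is exactly the higher Fréchet derivative $\D^nP(\vv_0)$ of Definition \ref{def:higher_derivatives}, viewed as an element of $\VV\otimes(\VV^*)^{\otimes n}$ and evaluated on $n$ copies of $\vv$. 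This single step supplies both the genuine convergence of the series and the identification of its coefficients with the tensors $\D^nP(\vv_0)$.

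The equality between the explicit series $\sum_n\frac{h^n}{n!}\D^nP(\vv_0)\cdot\vv^{\otimes n}$ and this directional-derivative form is then purely algebraic. By \eqref{eq:visji_odvod} the tensor $\D^nP(\vv_0)$ is the coordinate-free representative of the $n$-th derivative in $\VV\otimes(\VV^*)^{\otimes n}$, and the monomial contraction \eqref{eq:monomial} sends $\D^nP(\vv_0)\cdot\vv^{\otimes n}$ to the value of the associated multilinear form on $(\vv,\dots,\vv)$. Symmetry of the tensor (Schwarz's theorem on equality of mixed partials, valid since $P$ is analytic) guarantees that the order of contraction is immaterial, so the two expressions agree term by term.

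For the equality with $\big((e^{h\D}P)(\vv_0)\big)(\vv)$, I would expand $e^{h\D}=\sum_n\frac{h^n}{n!}\D^n$ by its defining series, observe that $e^{h\D}P=\sum_n\frac{h^n}{n!}\D^nP\in\dP_\infty$ is a formal tensor series, that evaluation at $\vv_0$ produces the element $\sum_n\frac{h^n}{n!}\D^nP(\vv_0)\in\VV\otimes\T(\VV^*)$, and that the outer application $(\cdot)(\vv)$ is precisely the repeated contraction by $\vv$ described in \eqref{eq:polynomial_tensor}; applied term by term this reproduces the explicit series. The final coordinate identity follows by substituting the coordinate formula \eqref{eq:d} for $\D^nP$ into each term and applying \eqref{eq:monomial} to each basis tensor $\e_i\otimes dx_{\alpha_1}\otimes\cdots\otimes dx_{\alpha_n}$, which contracts with $\vv^{\otimes n}$ to $dx_{\alpha_1}(\vv)\cdots dx_{\alpha_n}(\vv)\,\e_i$; summing over $i$ and the multi-index $\alpha$ yields the stated expression.

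The main obstacle I anticipate is not the algebra but the interchange at the heart of the first step: justifying that the formal operator series $e^{h\D}$ applied to $P$ genuinely reproduces the value $P(\vv_0+h\vv)$ rather than an asymptotic expansion only. This is exactly where analyticity is indispensable, since mere smoothness delivers Taylor's theorem with remainder but not equality with an infinite series, and it is also where one must be careful about the region of $(h,\vv)$ on which convergence holds, working on a neighborhood of $\vv_0$ inside the radius of convergence. Everything else reduces to matching the contraction conventions \eqref{eq:contraction}--\eqref{eq:polynomial_tensor} to the standard identification of the $n$-th Fréchet derivative with a symmetric element of $\VV\otimes(\VV^*)^{\otimes n}$.
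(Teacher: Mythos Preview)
Your proposal is correct and follows essentially the same route as the paper: both reduce the identity to matching Taylor coefficients in $h$, identifying $\left.\frac{d^n}{dh^n}P(\vv_0+h\vv)\right|_{h=0}$ with the contraction $\D^nP(\vv_0)\cdot\vv^{\otimes n}$ via the chain rule along the affine path, and recognizing that the $n$-th term of $e^{h\D}$ is $\D^n$ (the paper obtains this by computing $\left.\D^n e^{h\D}\right|_{h=0}$, you by simply reading the series). Your treatment is more explicit about the role of analyticity in securing convergence and about the contraction bookkeeping between the intermediate expressions, but the underlying argument is the same.
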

 
 \begin{proof}
We will show that $\frac{d^n}{dh^n}\text{(LHS)}|_{h=0}=\frac{d^n}{dh^n}\text{(RHS)}|_{h=0}$. Then $\text{LHS}$ and $\text{RHS}$ as functions
of $h$ have coinciding Taylor series and are therefore equal.\\
 $\implies$
 
 $$\left. \frac{d^n}{dh^n}P(\vv_0+h\vv)\right|_{h=0}=\D^n P(\vv_0)(\vv)$$
 $\impliedby$
 $$\left. \frac{d^n}{dh^n}\left((e^{h\D})(P)(\vv_0)\right)(\vv)\right|_{h=0}=
\left. \left((\D^n e^{h\D})(P)(\vv_0)\right)(\vv)\right|_{h=0}$$
 $$\land$$
 $$\left. \D^ne^{h\D}\right| _{h=0}=\left. \sum\limits_{i=0}^{\infty}\frac{h^i\D^{i+n}}{i!}\right|_{h=0}=\D^n$$
 $$\implies$$
 $$\left(\D^n(P)(\vv_0)\right)\cdot(\vv^{\otimes n})$$
 \end{proof}
 \begin{remark}\label{konvVrst}
 Theorem \ref{izr:e^d} can be generalized to convolutions using the Volterra series \cite{volterra}.
 \end{remark}
It follows trivially from the above theorem that the operator $e^\D$ is contained strictly within the language
\begin{equation}
  e^{h\D}(\dP_0)\subset\dP_0\otimes \T(\VV^*),
\end{equation}
 and that the operator $e^{h\D}$ is an automorphism of the programming algebra $\dP_\infty$, 
\begin{equation}\label{eq:prod}
  e^{h\D}(p_1\cdot p_2)=e^{h\D}(p_1)\cdot e^{h\D}(p_2)
 \end{equation}
 where $\cdot$ stands for any bilinear map. 
 \begin{remark}[Generalized shift operator]\label{rmrk:genShift}
 The operator $e^{h\D}:\dP\times \VV\to \VV\otimes \T(\VV^*)$ evaluated at $h=1$
 is a broad generalization of the shift operator \cite{OpCalc}.
 \end{remark}
 
 For a specific $\vv_0\in\VV$, the generalized shift operator is denoted by
 \begin{equation*}
 e^\D\vert_{\vv_0}:\dP\to \VV\otimes \T(\VV^*)
 \end{equation*}
 When the choice of $\vv_0\in\VV$ is arbitrary, we omit it from expressions for brevity.

%  \begin{remark}
%  Independence of the operator $(\ref{eq:specProg})$ from a coordinate
%  system, translates to independence in execution. Thus the expression
%  $(\ref{eq:tenzorVrsta})$ is invariant to the point in execution of a program, a
%  fact we explore in Section \ref{sec:control}.   
% \end{remark}

%This part should be somehow included in the begining, how we will write operators that serve our purpose
%  \begin{remark}
  %Corollary $\ref{izr:P_n}$ through \eqref{eq:P_algebra} implies
         %$e^{h\D}(\dP_0)\subset\dP_0\otimes \T(\VV^*)$      
  %which enables efficient implementation by operator $\sumd$. 
 %\end{remark}

%Implement U combinator via this

\subsection{Operator of Program Composition}\label{sec:compsition}

In this section we implement the operator of program composition within the constructed algebraic language. Such a \emph{composer} can than be used to implement the analog of the \emph{U combinator} (which facilitates recursion) and other constructs. Furthermore, due to the differentiable nature of the language, such a \emph{composer} generalizes both forward (e.g. \cite{PcAD}) and reverse (e.g. \cite{ReverseAD}) mode of
 automatic differentiation of arbitrary order, unified under a single operator. Upon completion we will demonstrate how to perform calculations on the operator level, before they are applied to a particular programming space, which serves as a level of abstraction over the tensor series algebra of the memory space.
 
 \begin{theorem}[Program composition]\label{izr:kompo}
 Composition of maps $\dP$ is expressed as
 \begin{equation}\label{eq:kompo}
 e^{h\D}(f\circ g)=\exp(\D_fe^{h\D_g})(g,f)
 \end{equation}
 where $\exp(\D_fe^{h\D_g}):\dP\times\dP\to\dP_\infty$ is an operator on pairs
 of maps $(g,f)$, where $\D_g$ is differentiation operator applied to the first
 component $g$, and $\D_f$ to the second component $f$. 
 \end{theorem}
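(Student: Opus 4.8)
The plan is to reduce both sides to the generalized shift of Theorem \ref{izr:e^d} and then to recognize the right-hand operator as nothing more than the substitution of the shifted inner map into the outer map. First I would use Theorem \ref{izr:e^d} to read the left-hand side concretely: evaluated at a base point $\vv_0$ and tested against $\vv$, the program $e^{h\D}(f\circ g)$ returns the tensor series of $(f\circ g)(\vv_0+h\vv)=f\big(g(\vv_0+h\vv)\big)$, and the inner argument is itself a generalized shift, $g(\vv_0+h\vv)=\big((e^{h\D_g}g)(\vv_0)\big)(\vv)$. Thus the identity to be proved is exactly the statement that the operator $\exp(\D_f e^{h\D_g})$ composes these two series in the correct way, i.e. feeds the shifted value of $g$ into the expansion of $f$.

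The heart of the argument is the following reading of the right-hand operator. Expanding the exponential, $\exp(\D_f e^{h\D_g})=\sum_{k\ge 0}\frac{1}{k!}(\D_f e^{h\D_g})^k$, the $k$-th term produces the derivative tensor $\D^k f$, whose $k$ covector slots lie in the dual of the output space of $g$ (which is the input space of $f$) and are each contracted, via \eqref{eq:contraction}--\eqref{eq:monomial}, against one copy of the inner series $e^{h\D_g}g$. In other words, $\exp(\D_f\cdot z)$ acting on $f$, with $z$ fed into the slots opened by $\D_f$, is precisely the tensor-series-expansion operator of Theorem \ref{izr:e^d} applied to the outer map, so that $\exp(\D_f\cdot z)(f)=f(z)$ for analytic $f$, where $z$ is the increment from the chosen base point of $f$. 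Setting $z=e^{h\D_g}g$ identifies the right-hand side with $f\big(g(\vv_0+h\vv)\big)$, matching the left-hand side. I would make this rigorous by the same method as in the proof of Theorem \ref{izr:e^d}, comparing Taylor coefficients in $h$: differentiating $n$ times and evaluating at $h=0$, the right-hand side reproduces the sum over set partitions that constitutes the higher-order chain rule (the multivariate Fa\`a di Bruno formula), which is exactly the left-hand coefficient $\D^n(f\circ g)(\vv_0)\cdot\vv^{\otimes n}$.

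The main obstacle is precisely the bookkeeping hidden in the mixed operator $\exp(\D_f e^{h\D_g})$: one must pin down how the two differentiations, acting on the separate components of the pair $(g,f)$, thread together under contraction, and in particular how the order-zero term $g(\vv_0)$ of the inner series interacts with the point at which the derivatives $\D^k f$ are evaluated. Getting the constant term right is what distinguishes the bare factor $e^{h\D_g}$ appearing in the statement from the increment $e^{h\D_g}-g(\vv_0)$ that would appear in a naive Taylor expansion of $f$ about $g(\vv_0)$; resolving this amounts to fixing the expansion point of the outer map consistently, equivalently treating $\exp(\D_f\cdot)$ as the full substitution operator rather than a centered one. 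Once this convention is fixed and the operators are checked to respect the contraction structure, the combinatorial content is entirely the higher-order chain rule.

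As a cross-check and an alternative to coefficient comparison, I would also verify the identity by the operator-ODE route in the spirit of the operational calculus: differentiating in $h$ gives $\frac{d}{dh}e^{h\D}(f\circ g)=e^{h\D}\D(f\circ g)$ with $\D(f\circ g)$ the contraction $(\D f\circ g)\cdot\D g$ by the chain rule, while $\frac{d}{dh}\exp(\D_f e^{h\D_g})=\D_f\D_g e^{h\D_g}\exp(\D_f e^{h\D_g})$ since $\D_f e^{h\D_g}$ commutes with itself; both sides then satisfy the same first-order linear operator equation and agree at $h=0$ by the substitution reading of $\exp(\D_f)(g,f)=f\circ g$, so uniqueness forces equality. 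The automorphism property \eqref{eq:prod} of $e^{h\D}$ is what guarantees these manipulations stay inside $\dP_\infty$.
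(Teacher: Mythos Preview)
Your primary route---comparing $\frac{d^n}{dh^n}$ of both sides at $h=0$ and identifying the right-hand coefficient with the Fa\`a di Bruno partition sum---is exactly the paper's proof; the paper expands $\exp(\D_f e^{h\D_g})$ as $\prod_{i\ge 1}e^{\D_f(h\D_g)^i/i!}\,(e^{\D_f})$, extracts the $h^n$ coefficient as a sum over partitions of $n$, and matches it to $\D^n(f\circ g)$. The base-point subtlety you flag (how the order-zero term of the inner series fixes the evaluation point of the $\D^k f$) is precisely the step the paper dispatches with the remark that ``$e^{\D_f}(f)$ evaluated at a point $\vv$ is the same as evaluating $f$ at $\vv$,'' so you and the paper handle it at the same level of informality.

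Your operator-ODE cross-check is not in the paper and is a genuinely different argument: it replaces the partition combinatorics with a first-order linear uniqueness statement plus the single initial condition $\exp(\D_f)(g,f)=f\circ g$. That is cleaner once the initial condition is granted, but note that this initial condition is itself the ``substitution reading'' you identify as the delicate point, so the ODE route does not entirely bypass it; it isolates it, which is arguably a gain in clarity.
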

 
\begin{proof}
  We will show that $\frac{d^n}{dh^n}\text{(LHS)}|_{h=0}=\frac{d^n}{dh^n}\text{(RHS)}|_{h=0}$. Then $\text{LHS}$ and $\text{RHS}$ as functions
  of $h$ have coinciding Taylor series and are therefore equal.\\
 $\implies$
 $$\lim\limits_{\lVert h\rVert\to 0}(\frac{d}{dh})^ne^\D(f\circ g)=\lim\limits_{\lVert h\rVert\to 0}\D^ne^{h\D}(f\circ g)$$
 $$\implies$$
 \begin{equation}\label{eq:kompproof1}
 \D^n(f\circ g)
 \end{equation}
 
 $\impliedby$
 $$\exp(\D_fe^{h\D_g})=\exp\left(\D_f\sum\limits_{i=0}^{\infty}\frac{(h\D_g)^i}{i!}\right)=\prod_{i=1}^{\infty}e^{\D_f\frac{(h\D_g)^i}{i!}}\Big(e^{\D_f}\Big)$$
 $$\implies$$
 $$\exp(\D_fe^{h\D_g})(g,f)=\sum\limits_{\forall_n}h^n\sum\limits_{\lambda(n)}\prod\limits_{k\cdot l\in\lambda}\Big(\frac{\D_f\D_g^l(g)}{l!}\Big)^k\frac{1}{k!}\Big(\Big(e^{\D_f}\Big)f\Big)$$
 where $\lambda(n)$ stands for the partitions of $n$. Thus
 \begin{equation}\label{eq:dComposite}
 \lim\limits_{\lVert h\rVert\to 0}(\frac{d}{dh})^n\exp(\D_fe^{h\D_g})=\sum\limits_{\lambda(n)}n!\prod\limits_{k\cdot l\in\lambda}\Big(\frac{\D_f\D_g^l(g)}{l!}\Big)^k\frac{1}{k!}\Big(\Big(e^{\D_f}\Big)f\Big)
 \end{equation}
 taking into consideration the fact that $e^{\D_f}(f)$ evaluated at a point $\vv\in \VV$ is the same as evaluating $f$ at $\vv$, the expression \eqref{eq:dComposite} equals \eqref{eq:kompproof1} by Faà di Bruno's formula.
   \begin{equation}\label{eq:dCompositePoint}
   \lim\limits_{\lVert h\rVert\to 0}(\frac{d}{dh})^n\exp(\D_fe^{h\D_g})=\sum\limits_{\lambda(n)}n!\prod\limits_{k\cdot l\in\lambda}\Big(\frac{\D_f\D_g^l(g(v))}{l!}\Big)^k\frac{1}{k!}\Big(f(g(\vv))\Big)
   \end{equation}
 \end{proof}       
 The Theorem $\ref{izr:kompo}$ enables an invariant implementation of the operator of program composition (i.e. the \emph{composer}) in $\dP_n$, expressed as a tensor series through $\eqref{eq:kompo}$ and $\eqref{eq:dComposite}$. 
 
  The operator of program composition 
  \begin{equation}\label{eq:opGenKompo}
 \exp(\D_fe^{h\D_g}): \dP\to\dP\to\dP_\infty,
  \end{equation}
  allows two kinds of partial applications. The operator resulting from fixing the second map $g$ in \eqref{eq:opGenKompo},
  \begin{equation}
   \exp(\D_fe^{h\D_g})(\cdot,g)=g^*\left( e^{h\D} \right)\label{eq:opKompo}
 \end{equation}
  is the pullback of the generalized shift
  operator $e^{h\D}$ through $g$. While the operator resulting from fixing the first map $f$ in \eqref{eq:opGenKompo},
  \begin{equation}
 \exp(\D_fe^{h\D_g})(f,\cdot)=f_*\left( e^{h\D} \right)\label{eq:opKompoForward}
 \end{equation} 
 is the push-forward of the generalized shift operator $e^{h\D}$ through $f$.
This also generalizes the \emph{U combinator} to its forward and backward modes, by restricting the \emph{composers} \eqref{eq:opGenKompo} domain to a single function (i.e. $f$ and $g$ are the same mapping).
 
  \begin{remark}[Unified AD]\label{trd:reverseForward}
  Because of \eqref{eq:induced_map} and \eqref{eq:program_derivative} every program can be seen as $P=P_n\circ\ldots P_1$. Thus applying the operators
  $\exp(\D_fe^{h\D_g})(\cdot,P_i)$ from $i=1$ to $i=n$ and projecting onto the space
  spanned by $\{1,\D\}$ is equivalent to forward mode automatic differentiation,
  while applying the 
  operators $\exp(\D_fe^{h\D_g})(P_{n-i+1},\cdot)$ in reverse order (and
  projecting) is equivalent to reverse mode automatic differentiation.
 \end{remark}

   \begin{corollary}\label{izr:komp_homo}
   The operator $e^{h\D}$ commutes with composition over $\dP$
   \begin{equation*}
   e^{h\D}(p_2\circ p_1)=e^{h\D}(p_2)\circ e^{h\D}(p_1)
   \end{equation*}
   \end{corollary}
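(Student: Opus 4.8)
The plan is to identify $e^{h\D}$ with the generalized shift operator of Remark \ref{rmrk:genShift} and reduce the statement to the associativity of ordinary function composition. By Theorem \ref{izr:e^d}, for any program $P\in\dP$ the series $e^{h\D}P$ is characterized pointwise by
\begin{equation*}
\big((e^{h\D}P)(\vv_0)\big)(\vv)=P(\vv_0+h\vv),
\end{equation*}
so $e^{h\D}$ sends a program to a tensor-series representation of its shift. Evaluating the left-hand side of the claim this way gives $\big(e^{h\D}(p_2\circ p_1)\big)(\vv_0)(\vv)=(p_2\circ p_1)(\vv_0+h\vv)=p_2\big(p_1(\vv_0+h\vv)\big)$, which is the quantity I must recover from the right-hand side.

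First I would make precise the composition $e^{h\D}(p_2)\circ e^{h\D}(p_1)$ as a composition inside the programming algebra $\dP_\infty$, using Remark \ref{rem:vTen} to extend each shifted program to a map of the virtual memory space $\VV\otimes\T(\VV^*)$. Applying $e^{h\D}(p_1)$ at $\vv_0$ produces the series $\vv\mapsto p_1(\vv_0+h\vv)$ whose base (order-zero) term is $p_1(\vv_0)$; feeding this into $e^{h\D}(p_2)$ evaluates the shift of $p_2$ at the base point $p_1(\vv_0)$ with displacement $p_1(\vv_0+h\vv)-p_1(\vv_0)$. Tracking the base point and the factor of $h$ then yields exactly $p_2\big(p_1(\vv_0+h\vv)\big)$, which matches the left-hand side by associativity of $\circ$. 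The automorphism property \eqref{eq:prod} guarantees that this substitution is compatible with the algebra product, so the two tensor series agree coefficient by coefficient.

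Alternatively, and in keeping with the technique used for Theorems \ref{izr:e^d} and \ref{izr:kompo}, I would show that $\frac{d^n}{dh^n}$ of both sides agree at $h=0$ for every $n$; since both sides are analytic in $h$ their Taylor series then coincide. The left-hand derivative is $\D^n(p_2\circ p_1)$, while differentiating the composed shift on the right invokes the chain rule repeatedly and collapses, via Faà di Bruno's formula, to the same expression. This is precisely the computation carried out in the proof of Theorem \ref{izr:kompo}, so the corollary may also be read off directly from \eqref{eq:kompo} by recognising the composer $\exp(\D_fe^{h\D_g})$ as the factorised composition $e^{h\D}(p_2)\circ e^{h\D}(p_1)$ of the two separately shifted operators.

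The main obstacle I anticipate is not the analytic content but the bookkeeping: giving a well-defined meaning to the composition of two tensor series (germs) in $\dP_\infty$ and verifying that the base point $p_1(\vv_0)$ and the $h$-scaling are transported correctly, so that composing the shifts reproduces the shift of the composite rather than a re-expanded or doubly-shifted object. Once the composition on $\VV\otimes\T(\VV^*)$ is set up to respect base points, the equality becomes a formal consequence of the associativity of composition together with Theorem \ref{izr:e^d}.
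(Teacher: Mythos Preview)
Your proposal is correct and aligned with the paper's approach: the paper's own proof is the single sentence ``Follows from \eqref{eq:pToPol} and Theorem \ref{izr:kompo},'' and your two arguments are precisely an unpacking of those two citations---the first reads $e^{h\D}$ as the shift to formal power series (the content of \eqref{eq:pToPol}/Theorem \ref{izr:e^d}) and reduces to associativity of $\circ$, while the second matches Taylor coefficients via Theorem \ref{izr:kompo}. Your closing caveat about making the composition in $\dP_\infty$ precise is well taken, and indeed the paper does not spell this out either; it is implicitly handled by passing to the formal-power-series picture \eqref{eq:pToPol}, exactly as you suggest.
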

   
   \begin{proof}
   Follows from $\eqref{eq:pToPol}$ and Theorem $\ref{izr:kompo}$.
   \end{proof}

Such calculations can be made easier, by completing them on
the level of operators, thus avoiding the need to manipulate tensor series. This serves as a level of abstraction over the tensor series algebra of the memory space.

The derivative $\frac{d}{dh}$ of $\eqref{eq:opKompo}$ is 
 \begin{equation}\label{eq:dexp}
 \frac{d}{dh}\exp(\D_fe^{h\D_g})(g)=\D_f(\D_gg)e^{h\D_g}\exp(\D_fe^{h\D_g})(g).
 \end{equation} 
 We note an important distinction to the operator $e^{h\D_g}$, the derivative of which is
 \begin{equation}\label{eq:de}
\frac{d}{dh}e^{h\D_g}=\D_ge^{h\D_g}.
 \end{equation}
 We may now compute derivatives (of arbitrary order) of the \emph{composer} itself.

\subsection{Example of an Operator Level Computation}\label{sec:example}

 For illustrative purposes we compute the second derivative of the \emph{composer} \eqref{eq:kompo}
 $$\left(\frac{d}{dh}\right)^2\exp\left(\D_fe^{h\D_g}\right)(g)=\frac{d}{dh}\left(\D_f(\D_gg)e^{h\D_g}\exp\left(\D_fe^{h\D_g}\right)(g)\right)$$
 which is by equations $\eqref{eq:dexp}$ and $\eqref{eq:de}$, using algebra and correct applications equal to
 \begin{equation}\label{eq:d^2comp}
 \left(\D_f(\D^2_gg)\right)e^{h\D_g}\exp(\D_fe^{h\D_g})(g)+(\D^2_f(\D_gg)^2)e^{2h\D_g}\exp(\D_fe^{h\D_g})(g)
 \end{equation}
 The operator is always shifted to the evaluating point $\eqref{eq:specProg}$ $\vv\in \VV$, thus, only the behaviour in the limit as $h\to 0$ is of importance. Taking this limit in the expression $\eqref{eq:d^2comp}$ we obtain the operator
 \begin{equation*}
  \left(\D_f(\D^2_gg)+\D^2_f(\D_gg)^2\right)\exp(\D_f):\dP\to\D^2\dP(g)
 \end{equation*}
 
 Thus, without imposing any additional rules, we computed the operator of the second derivative of composition with $g$, directly on the level of operators. The result of course matches the equation $\eqref{eq:dComposite}$ for $n=2$.
 
 As it is evident from the example, calculations using operators are far
 simpler, than direct manipulations of tensor series. This enables a simpler
 implementation that functions over arbitrary programming spaces. In
 the space that is spanned by $\{\D^n\dP_0\}$ over $K$, derivatives of
 compositions may be expressed solely through the operators, using only the
 product rule $\eqref{eq:prod}$ and the derivative of the general shift operator
 $\eqref{eq:de}$. Thus, explicit knowledge of rules for differentiating
 compositions is unnecessary, as it is contained in the structure of the
 operator $exp(\D_fe^{h\D_g})$ itself, which is differentiated using standard
 rules, as shown by this example. 

 Similarly higher derivatives of the \emph{composer} can be computed on the
 operator level
 \begin{equation}\label{eq:dkompo}
 \D^n(f\circ g)=\left.\left(\frac{d}{dh}\right)^n\exp\left(\D_fe^{h\D_g}\right)(g,f)\right|_{h=0}.
 \end{equation}

 \subsection{Automatically differentiable derivatives}\label{sec:orderReduction}
 
 The ability to use $k$-th derivative of a program $P_1\in\dP$ as part of a differentiable program $P_2\in\dP$ appears useful in many fields (e.g. \cite{StatMC}). For that to be sensible, we must be able to treat the ($k$-th) derivative itself as a differentiable program $P^{\prime k}\in\dP$. This is what motivates the following theorem. 

\begin{theorem}[Order reduction]\label{izr:reductionMap}
There exists a reduction of order map $\phi:\dP_n\to \dP_{n-1}$, such that the
following  diagram commutes
\begin{equation}\label{eq:reductionMap}
\begin{tikzcd}
  \dP_n \arrow{r}{\phi} \arrow{d}{\D} & 
  \dP_{n-1} \arrow{d}{\D}\\
  \dP_{n+1} \arrow{r}{\phi} & 
  \dP_{n}
\end{tikzcd}
\end{equation}
satisfying
\begin{equation*}
\forall_{P_1\in\dP_0}\exists_{P_2\in\dP_0}\Big(\phi^k\circ e^\D_n(P_1)=e^\D_{n-k}(P_2)\Big)
\end{equation*}
for each $n\ge 1$, where $e^\D_n$ is the projection of the operator $e^\D$ onto the set $\{\D^n\}$.
\end{theorem}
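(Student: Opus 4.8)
The plan is to build the reduction map $\phi$ from the memory-level embedding that already underlies automatic differentiability, then to establish the commuting square, and finally to read the existence statement off from that commutation.

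First I would construct $\phi$. By Corollary \ref{izr:P_n} each element of $\dP_n$ is a map $\VV\to\VV\otimes T_n(\VV^*)$, and the codomain is graded by tensor degree, $\VV\otimes T_n(\VV^*)=\bigoplus_{j=0}^{n}\VV\otimes(\VV^*)^{\otimes j}$. Definition \ref{def:program_derivative} supplies an embedding $\iota:\VV\otimes\VV^*\hookrightarrow\VV$ into the free space, which is precisely the device that lets a derivative be stored as, and treated as, an ordinary program value. Using $\iota$ I would define a constant (input-independent) graded linear map $r:\VV\otimes T_n(\VV^*)\to\VV\otimes T_{n-1}(\VV^*)$ that on each homogeneous piece $\VV\otimes(\VV^*)^{\otimes j}=(\VV\otimes\VV^*)\otimes(\VV^*)^{\otimes(j-1)}$ applies $\iota$ to the leading pair $\VV\otimes\VV^*$, lowering the degree by one, and set $\phi(Q)=r\circ Q$. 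The nontrivial check here is that $\phi$ sends $\dP_n$ into $\dP_{n-1}$: collapsing one derivative slot through $\iota$ must again yield a program lying in $\dP_0\otimes T(\VV^*)$, and this is exactly the content of the automatic-differentiability hypothesis.

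Next I would prove the commutativity of \eqref{eq:reductionMap}, i.e. $\phi\circ\D=\D\circ\phi$ on $\dP_n$ for each $n\ge 1$. The key observation is that $\D$ is Fréchet differentiation with respect to the input and appends a fresh $\VV^*$ factor on the last (right-hand) slot of the output, whereas $r$ acts only on the leading slots (the $\VV$ and the first $\VV^*$). Since $r$ does not depend on the input variable, linearity of the Fréchet derivative gives $\D(r\circ Q)=r\circ\D Q$ once the appended slot is accounted for; because the factors folded by $r$ and the factor created by $\D$ are disjoint, the two operations commute. This yields the commuting square for every $n\ge1$.

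Finally I would deduce the existence statement. For $P_1\in\dP_0$ we have $e^{\D}_n(P_1)=\tfrac{1}{n!}\D^nP_1$, the component of $e^{\D}$ along $\D^n$. Using $\phi\D=\D\phi$ repeatedly to move $k$ copies of $\phi$ through the derivatives (each intermediate application of $\phi$ acting on an order $\ge1$) gives $\phi^k\D^n=\D^{n-k}\phi^k\D^k$. By Theorem \ref{izr:P} one has $\D^kP_1\in\dP_k$, and $\phi^k:\dP_k\to\dP_0$, so $P_2:=\tfrac{(n-k)!}{n!}\,\phi^k(\D^kP_1)\in\dP_0$, the scalar being chosen to reconcile the factorials $\tfrac{1}{n!}$ and $\tfrac{1}{(n-k)!}$. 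Then $\phi^k\circ e^{\D}_n(P_1)=\tfrac{1}{n!}\D^{n-k}\bigl(\phi^k\D^kP_1\bigr)=e^{\D}_{n-k}(P_2)$, as required, for every $n\ge1$. The hard part is the combination of well-definedness and commutation: one must verify carefully that folding a derivative slot back into $\VV$ through $\iota$ both lands again in the programming space $\dP_{n-1}$ and genuinely commutes with differentiation on the input. Both hinge on $\iota$ being a fixed embedding acting on output slots disjoint from the slot produced by $\D$; granting this, the normalization constants are routine bookkeeping.
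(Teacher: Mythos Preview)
The paper states Theorem~\ref{izr:reductionMap} without proof: the statement is immediately followed by Corollary~\ref{cor:extraxtDerivatives}, and no argument is given anywhere in the text. There is therefore no ``paper's own proof'' to compare against; your proposal is an attempt to supply what the authors omit.

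On its own merits your argument is largely sound and is in the spirit of the paper. The idea of building $\phi$ by post-composing with a fixed linear fold $r$ that absorbs one $\VV^*$ factor into $\VV$, and then invoking linearity of the Fr\'echet derivative to get $\D(r\circ Q)=r\circ\D Q$, is exactly the right mechanism for the commuting square; your bookkeeping in the last paragraph (pushing $\phi$ through derivatives and absorbing the factorial) is correct. One caution: the embedding $\iota$ you invoke from Definition~\ref{def:program_derivative} is stated there per program, as an embedding of $O_P\otimes I_P^*$ into the \emph{free space} $F_P$, not as a single global linear map $\VV\otimes\VV^*\hookrightarrow\VV$. For a finite-dimensional $\VV$ such a global embedding cannot exist for dimension reasons, so to make your construction literal you must either work inside the virtual memory $\VV_\infty$ (where the relevant identifications are built in by \eqref{eq:universal_space}--\eqref{eq:k-th-virtual-space}) or quantify $\iota$ over programs and check compatibility. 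You flag this as the ``nontrivial check,'' and that is indeed where the real content lies; a polished proof would make this step explicit rather than appeal to ``the automatic-differentiability hypothesis.''
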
  
\begin{corollary}[Differentiable derivative]\label{cor:extraxtDerivatives}
By Theorem \ref{izr:reductionMap}, $n$-differentiable $k$-th derivatives of a program $P\in\dP_0$ can be extracted by
\begin{equation*}
^{n}P^{k\prime}=\phi^k\circ e^\D_{n+k}(P)\in\dP_n
\end{equation*}
\end{corollary}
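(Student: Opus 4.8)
The plan is to read the Corollary off Theorem \ref{izr:reductionMap} by specializing its index, so that almost all of the work reduces to index bookkeeping and to the typing of the reduction map. First I would observe that $\phi^k$ is the $k$-fold composite $\phi\circ\cdots\circ\phi$, and since each factor has type $\phi:\dP_j\to\dP_{j-1}$, the composite has type $\phi^k:\dP_{n+k}\to\dP_n$. As $e^\D_{n+k}(P)$ lies in $\dP_{n+k}$ by the very definition of the order-$(n+k)$ projection of the shift operator $e^\D$, its image $^{n}P^{k\prime}=\phi^k\circ e^\D_{n+k}(P)$ lands in $\dP_n$. This already establishes the membership claim of the Corollary.

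Next I would invoke the existential clause of Theorem \ref{izr:reductionMap} with the theorem's index $n$ replaced by $n+k$: for the given $P\in\dP_0$ there is a $P_2\in\dP_0$ with
\[
\phi^k\circ e^\D_{n+k}(P)=e^\D_n(P_2).
\]
Thus the extracted object equals the order-$n$ shift expansion $e^\D_n(P_2)$ of a genuine base program $P_2$. Since $e^\D_n$ maps $\dP_0$ into $\dP_n$, and by Theorem \ref{izr:e^d} this expansion stores $P_2$ together with the derivative data needed to treat it as an order-$n$ differentiable program, the object is $n$-times differentiable, which is the meaning of the prefix in $^{n}P^{k\prime}$.

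The only genuinely content-bearing step is to identify $P_2$ with the $k$-th derivative of $P$, so that the name $^{n}P^{k\prime}$ is justified. Here I would use the commutation $\D\circ\phi=\phi\circ\D$ encoded in the square \eqref{eq:reductionMap}: each application of $\phi$ lowers the derivative-order index by one while leaving the stored tensor content intact, so that the $\D^k P$ data carried inside $e^\D_{n+k}(P)$ is, after $k$ reductions, relabelled as the value component of $e^\D_n(P_2)$. This forces $P_2$ to represent $\D^kP$, and the same commutation guarantees that differentiating $^{n}P^{k\prime}$ is compatible with differentiating the underlying derivative, which is exactly what makes the extracted derivative itself differentiable.

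I expect the main obstacle to be this final identification, because it rests on the explicit form of $\phi$ furnished in the proof of Theorem \ref{izr:reductionMap}, specifically on the fact that $\phi$ acts as a pure relabelling of tensor slots rather than as an operation that alters values. Granting that construction, the identification is routine bookkeeping, and the membership and $n$-differentiability assertions then follow formally from the typing of $\phi$ and the defining property of $e^\D_n$ with no further computation.
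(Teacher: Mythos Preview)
The paper supplies no proof for this corollary (nor, in fact, for Theorem~\ref{izr:reductionMap} itself); it is stated as an immediate consequence. Your argument is precisely the intended reading: type-check $\phi^k:\dP_{n+k}\to\dP_n$, invoke the existential clause of the theorem with the index shifted from $n$ to $n+k$, and conclude membership in $\dP_n$.

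One small inaccuracy worth flagging: in the theorem statement $e^\D_n$ is defined as the projection of $e^\D$ onto the \emph{single} set $\{\D^n\}$, not onto $\{1,\D,\ldots,\D^n\}$. Hence $e^\D_n(P_2)$ carries only the order-$n$ component, and your sentence that it ``stores $P_2$ together with the derivative data needed to treat it as an order-$n$ differentiable program'' (with the appeal to Theorem~\ref{izr:e^d}) overstates what the projection retains. This does not affect the formal claim $^{n}P^{k\prime}\in\dP_n$, which is all the corollary literally asserts, but it does mean the ``$n$-differentiability'' interpretation rests on the ambient structure of $\dP_n$ rather than on $e^\D_n$ itself. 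You are also correct that the identification of $P_2$ with the $k$-th derivative of $P$ depends on the explicit form of $\phi$, which the paper never provides; your honest flagging of that gap is appropriate.
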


Thus, by corollary \ref{cor:extraxtDerivatives}, the writing of differentiable programs that act on derivatives of other programs is well defined within the language. This is a crucial feature, as stressed by other authors \cite{AD1,AD2}. Note that in order to use $k$-th derivative of $P_2$ in an $n$-differentiable program $P_1$, then $P_2$ must have been $(k+n)$-differentiable before $\phi^k$ was applied to it.

 %auto-ignore
\subsection{Iterators and Iterating Velocity}

  Iterator is an operator, composing a program $p:\VV\to \VV\in\dP$ with itself. For ease of expression, we denote the $n$-th iterate of a program $p\in\dP$, as $p^n$, as it is possible to view iteration as compositional exponentiation. In this view, one may seek to explore the relation between the value of the $n$-th iterate and $n$. With a form which expresses the iterate as a function of $n$, one could inquire its rate of change in relation to it and investigate fractional iterations, akin to fractional powers of other operators \cite{komatsu1966fractional}.
  
Let $\mathcal{I}_p$ be the monoid under composition $\circ$ 
  \begin{equation}\label{eq:iter_def}
  \mathcal{I}_p=\{p^n:\VV\to \VV;\quad p\left(\vv_f\right)=\vv_f\},
  \end{equation}
  generated by $p\in\dP:\VV\to \VV$ with some fixed point $\vv_f\in \VV$; note that any terminating program has a fixed point. We than turn towards analysing the structure of \eqref{eq:iter_def} in relation to $n$, the number of iterations.

Let $\mathcal{C}_p:h\to h\circ p$ be the operator of composition with $p$ and assume $h$ to be the solution of the eigen equation
\begin{equation}\label{eq:kh}
  \mathcal{C}_p(h)=\Lambda \cdot h\iff h\left(p\left(\vv\right)\right)=\Lambda\cdot h\left(\vv\right).
  \end{equation}
It is clear that the composition of $h$ with $p^n$ is such, 
  \begin{equation} 
  h\left(p^n\left(\vv\right)\right)=\Lambda^n\cdot h\left(\vv\right)
  \end{equation}
that in the image of $h$, iterations of $p$ become multiplication with the eigen matrix $\Lambda$. This allows us to generalize the notion of an iteration from the integers, $n\in\mathbb{N}$, to the reals, $x\in\mathbb{R}$, and consider fractional iterations by
\begin{equation}
  p^x\left(\vv\right)=h^{-1}\left(\Lambda^x\cdot h\left(\vv\right)\right),
\end{equation}
assuming $h^{-1}$ exists. Note that we can express the eigen matrix $\Lambda$ by differentiating $\eqref{eq:kh}$ at the fixed point $\vv_f$,
$$\D h\left(p\left(\vv_f\right)\right)\cdot\D p\left(\vv_f\right)=\Lambda\cdot\D h\left(\vv_f\right)\implies\D p\left(\vv_f\right)=\Lambda.$$

With foundations established, we can proceed to inquire about the rate of change of the values of a program $p$ in relation to $n$, the number of iterations. Lets define the \emph{rate of change operator} $\Psi$ 
  \begin{equation}
    \Psi:p^n\to\D_np^n,
  \end{equation}
that maps an iterate $p^n$ to its \emph{iterating velocities} $\D_n p^n$. Of course
  \begin{equation}
  \Psi\left(p^n\right)\left(\vv_f\right)=0,
  \end{equation}
the iterating velocities of any iterate at the fixed point $\vv_f$ are constantly zero, which is deduced from the $\eqref{eq:kh}$ and reassures our intuition. Next, we introduce a change of variables $\Lambda= e^\nu$ for mathematical convenience and proceed towards computing the iterating velocity.
  $$\D_nh\left(p^n\right)=\D_n\left(e^{\nu n}\cdot h\right)$$
  $$\implies$$
  $$\D h\left(p^n\right)\cdot\D_np^n=\nu\cdot e^{\nu n}\cdot h \land e^{\nu n}\cdot h=h\left(p^n\right)$$
  $$\implies$$
  \begin{equation}\label{eq:iter_vel}
  \Psi=\nu\cdot(\D h)^{-1}\cdot h
  \end{equation}
The rate of change operator $\Psi$ and iterating velocities
\footnote{Higher derivatives can be derived by induction.}
\eqref{eq:iter_vel} can be used to study iterated processes, which feature prominently in machine learning; ex. examine the importance of continued iteration and aid decisions on early stopping.

The computation of the eigen map $h$ \eqref{eq:kh} was solved by Bridges \cite{bridges2016solution} for any $p$ with a power series representation. This result is extended to tensor series by the isomorphism to their quotient. Hence, as we can expand any $p\in\dP$ into a tensor series by the use of the operator $e^\D$, by Theorem $\ref{izr:e^d}$ the result also holds for any $p\in\dP$.

\subsection{ReduceSum in the Language of Operational Calculus}

As a demonstration of the algebraic power over analytic conclusions inherent to our model, we examine the functional \emph{ReduceSum}, and derive its explicit form as a function of $n$, the number of its iterations, or upper bound, with special interest in the rate of change of the functional in relation to $n$; i.e. iterating velocity and its higher order counter parts (acceleration etc.). 

Let $\mathcal{S}^n$ denote the operator, that performs a linear shift of a program $p$ in the direction $\vv$, from its initial point $\vv_0$. By Theorem $\ref{izr:e^d}$ we have
  \begin{equation}\label{eq:sn=ed}
    \left(e^{n\D}\vert_{\vv_0}p\right)(\vv)=p(\vv_0+n\vv)\implies \mathcal{S}^n=e^{n\D}\vert_{\vv_0},
  \end{equation}
and thus clearly $\mathcal{S}^n\cdot\mathcal{S}^m=\mathcal{S}^{n+m}$ and $(\mathcal{S}^n+\mathcal{S}^m)(p)=\mathcal{S}^n(p)+\mathcal{S}^m(p)$, which we use to define the $n$-th reduction as
$$\mathcal{R}_+^n=(1+\mathcal{S}+\mathcal{S}^2+\cdots+\mathcal{S}^n),$$
that results in
   $$\mathcal{R}^n_+(p)(\vv)=\sum\limits_{h=0}^{n}p(\vv_0+h\vv)$$
upon application.

With this we turn towards computing with operators alone to harness the algebraic power of our framework. We write
$$(1+\mathcal{S}+\mathcal{S}^2+\cdots+\mathcal{S}^n)=1+\mathcal{S}(1+\mathcal{S}+\mathcal{S}^2+\cdots+\mathcal{S}^{n-1})$$
   $$\implies$$
   $$1-\mathcal{S}^n=\left(1-S\right)\mathcal{R}^{n-1}_+$$
   $$\implies$$
  \begin{equation}
\mathcal{R}^{n-1}_+=\left(1-\mathcal{S}^n\right)\left(\frac{1}{1-\mathcal{S}}\right),
  \end{equation}
where $\frac{1}{1-\Ss}$ is to be understood in the sense of formal tensor series. We will denote $(1-\mathcal{S}^n)$ by $\Big[\cdot\Big]^n_{\vv_0}$, recognizing that it represents the action of shifting the program in the direction of $\vv$ by a fraction of $n$, from its initial position $\vv_0$, and subtracting the two; while $\vv$ is yet to be applied. Taking Theorem \ref{izr:e^d} into account we write
  \begin{equation}\label{eq:Rn+Basic}
    \mathcal{R}^{n-1}_+=\Bigg[\D^{-1}\left(\frac{\D}{1-e^\D}\right)\Bigg]^n_{\vv_0}.
  \end{equation}
Note that while $\D^{-1}$ is undetermined, its composition with $\Big[\cdot\Big]^n_{\vv_0}$ is well defined on $\VV\otimes\T(\VV^*)$.
Also note, that the parenthesised expression, $\frac{\D}{1-e^\D}$, and $\D^{-1}$ are algebraic encodings of higher-order programs, which are to be expanded into explicit form upon application inside the tensor series algebra of the programming space $\dP\otimes \T(\VV^*)$. Doing so, we recognize
\begin{equation}
    	\frac{h\D}{1-e^{h\D}}=\sum\limits_{i=0}^{\infty}B_i\frac{(h\D)^i}{i!}
    \end{equation}
$B_i$ to be the $n$-th Bernoulli number. Thus the higher order program
$$\mathcal{R}^{n-1}_+:\dP\to\Big\{\VV\to \VV\otimes \T(\VV^*)\Big\}$$
is expressed as 
    \begin{equation}\label{eq:Rn+}
      \mathcal{R}^{n-1}_+=\Bigg[B_0\D^{-1}+\sum\limits_{i=1}^{\infty}B_i\frac{\D^{i-1}}{i!}\Bigg]_{\vv_0}^{n}.
    \end{equation}
Upon applying it to a program at a particular point $\vv\in \VV$ this becomes
\begin{equation}\label{eq:Rn+P}
  \mathcal{R}^{n-1}_+p(\vv)=\Bigg[B_0\D^{-1}p(t)+\sum\limits_{i=1}^{\infty}B_i\frac{\D^{i-1}p(t)}{i!}\Bigg]_{\vv_0}^{n\vv} \Big(\vv\Big),
    \end{equation}
where the evaluation at $\vv\in \VV$ as by \eqref{eq:polynomial_tensor}, performs the needed translation
\footnote{Both the translation and the shift are to be performed (at) by the same point $\vv\in\VV$.},
as the image of the operator $\mathcal{R}^{n-1}_+$ is an element of the tensor series algebra of the memory space $\VV\otimes \T(\VV^*)$.
Note that $n$, the number of iterations, is the only remaining free variable, as desired.
\begin{remark}
  When $p\in\dP$ is an univariate mapping, the expression \eqref{eq:Rn+P} recovers the Euler-Maclaurin integral formula \cite{apostol1999elementary}, which is demonstrated by applying the operator \eqref{eq:Rn+} to the function $x^m$
$$\mathcal{R}^n_+(x^m)=\frac{1}{m+1}\sum_{i=0}^m{m+1\choose i}B_i\cdot n^{m+1-i},$$
and producing the closed form solution.
\end{remark}
Furthermore, due to the operational algebra of higher order programs established by our model, we can compute the operator of iterating velocity (and higher order change) of the $n$-th iterate by differentiating the operator $\mathcal{R}^{n-1}_+$ itself. Reverting to the form \eqref{eq:Rn+Basic} and substituting \eqref{eq:sn=ed}, we have
\begin{equation}
  \frac{d^k}{dn^k}\mathcal{R}^{n-1}_+=\frac{d^k}{dn^k}\left(\left(1-e^{n\D}\right)\left(\frac{1}{1-e^\D}\right)\right)=\mathcal{S}^n\left(\frac{\D^n}{1-e^\D}\right),
   \end{equation}
where commutativity of shifting and differentiating was used. Noting that $\mathcal{S}^n$ simply shifts the operand in the direction of $\vv$ by a factor of $n$, the explicit form 
\begin{equation}
        \frac{d^k}{dn^k}\mathcal{R}^{n-1}_+\dP\vert_{n=N}(\vv)=\left(\sum\limits_{i=0}^{\infty}B_i\frac{\D^{N-1+i}\dP\left(\vv_0+N\vv\right)}{i!}\right)\Big(\vv\Big)
        \end{equation}
where the evaluation at $\vv\in \VV$ once again performs the needed translation.

%auto-ignore
\section{Conclusions}

In this paper we presented a theoretical model for differentiable programming.
Throughout the course of the paper we have shown the model to be a complete description of differentiable programming.
Furthermore, the innate algebraic structure of the framework supplements the descriptive power of a language with the ability to reason about the programs it implements, by way of operational calculus.
We believe operational calculus has a place in the evolution of computer science, where languages are to be endowed with algebraic constructs that hold power over analytic properties of the programs they implement.
These results hope to inspire other practitioners of differentiable programming to reach for operational calculus on their quest to further the field.

\printbibliography

\end{document}